\DeclareMathOperator{\tr}{tr}
\DeclareMathOperator{\Tr}{Tr}
\newcommand{\ket}[1]{\ensuremath{|#1\rangle}}
\newcommand{\bra}[1]{\ensuremath{\langle#1|}}
\newcommand{\ketbra}[2]{\ensuremath{\ket{#1} \! \bra{#2}}}
\newcommand{\proj}[1]{\ensuremath{\ketbra{#1}{#1}}}
\newcommand{\braket}[2]{\ensuremath{\langle{#1}|{#2}\rangle}}
\renewcommand{\1}{{\rm 1\hspace{-0.9mm}l}}
\newcommand{\XX}{\mathcal{X}}
\newcommand{\YY}{\mathcal{Y}}
\newcommand{\ZZ}{\mathcal{Z}}
\newcommand{\DD}{\mathcal{D}}
\newcommand{\TT}{\mathcal{T}}
\newcommand{\CC}{\mathcal{C}}
\newcommand{\PP}{\mathcal{P}}
\newcommand{\UU}{\mathcal{U}}
\newcommand{\LL}{\mathcal{L}}
\newcommand{\ketV}[1]{\ensuremath{|#1\rangle\!\rangle}}
\newcommand{\braV}[1]{\ensuremath{\langle\!\langle#1|}}
\newcommand{\ketbraV}[2]{\ensuremath{\ketV{#1}\braV{#2}}}
\newcommand{\projV}[1]{\ensuremath{\ketbraV{#1}{#1}}}
\theoremstyle{definition}
\newtheorem{lemma}{Lemma}
\newtheorem{theorem}{Theorem}
\newtheorem{remark}{Remark}
\newtheorem*{rem*}{Remark}
\newcommand{\rhozeropsi}{\ensuremath{\rho_0^{(\psi)}}}
\newcommand{\rhoonepsi}{\ensuremath{\rho_1^{(\psi)}}}
\newcommand{\pcondone}{p_{\mathrm{I}}^{(\psi, \Omega)}}
\newcommand{\pcondtwo}{p_\mathrm{II}^{(\psi, \Omega)}}
\newcommand{\pone}{p_\mathrm{I}}
\newcommand{\ptwo}{p_\mathrm{II}}
\def\>{\rangle}
\def\<{\langle}
\begin{document}

\title{Discrimination and certification of unknown quantum measurements}

\author{Aleksandra Krawiec}
\email{aaleksandra.krawiec@gmail.com}
\affiliation{Institute of Theoretical and Applied Informatics, Polish Academy of Sciences, ul. Ba{\l}tycka 5, 44-100 Gliwice, Poland}
\affiliation{AstroCeNT, Nicolaus Copernicus Astronomical Center, Polish Academy of Sciences, ul. Rektorska 4, 00-614 Warsaw, Poland}
\orcid{0000-0001-8390-6569}
\author{{\L}ukasz Pawela}
\affiliation{Institute of Theoretical and Applied Informatics, Polish Academy of Sciences, ul. Ba{\l}tycka 5, 44-100 Gliwice, Poland}
\orcid{0000-0002-0476-7132}
\author{Zbigniew Pucha{\l}a}
\affiliation{Institute of Theoretical and Applied Informatics, Polish Academy of Sciences, ul. Ba{\l}tycka 5, 44-100 Gliwice, Poland}
\orcid{0000-0002-4739-0400}

\maketitle

\begin{abstract}
We study the discrimination of von Neumann measurements in the scenario when we
are given a reference measurement and some other measurement. The aim of the
discrimination is to determine whether the other measurement is the same as the
first one. We consider the cases when the reference measurement is given without
the classical description and when its classical description is known. Both
cases are studied in the symmetric and asymmetric discrimination setups.
Moreover, we provide optimal certification schemes enabling us to certify a
known quantum measurement against the unknown one.
\end{abstract}

\maketitle

\section{Introduction}
The need for appropriate certification tools is one of the barriers to the
development of large-scale quantum technologies.~\cite{eisert2020quantum} In
this work, we propose tests that verify if a given device corresponds to either
its classical description or the reference device.

But why should we care about the discrimination of devices which descriptions we
do not know? A lot is known about discrimination of quantum states, channels and
measurements, which descriptions we do know. In the standard discrimination
problem, there are two quantum objects, and one of them is secretly chosen. The
goal of discrimination is to decide which of the objects was chosen.  
(See~\cite[Chapter 11]{paris2004quantum}, \cite{bergou2007quantum,barnett2009quantum,
bae2015quantum} for the pedagogical reviews  of the discrimination problems and
their applications). These objects can be quantum states but also quantum
channels and measurements. However, what if we were given a reference quantum
measurement or channel instead of its classical description? Then, we may want
to discriminate them regardless of their classical descriptions. Therefore, we
arrive at the new problem of discrimination of unknown objects.

Discrimination of known quantum channels was mainly studied for certain classes
of channels like unitary
channels~\cite{acin2001statistical,bae2015discrimination,kawachi2019quantum}.
Advantage of using entangled states for minimum-error discrimination of quantum
channels was studied
in~\cite{sacchi2005optimal,sacchi2005entanglement,piani2009all}. General
conditions when quantum channels can be discriminated in the minimum error,
unambiguous and asymmetric scenarios, were derived in~\cite{duan2009perfect},
\cite{wang2006unambiguous} and~\cite{krawiec2021excluding} respectively. Another
formalism used for studying discrimination of quantum channels is based on
process POVM (PPOVM)~\cite{ziman2008process}. It was applied to discrimination
of unitary channels in~\cite{sedlak2009unambiguous,ziman2010single}. A practical
approach to discrimination of noisy quantum gates without assistance of
entanglement was recently studied in~\cite{choi2022single}.

Discrimination of unknown unitary channels was first studied in the
work~\cite{hillery2010decision} in both minimum-error and unambiguous setups.
The authors calculated that the probability of successful minimum-error
discrimination between two random qubit unitary channels equals $7/8$ and they
made use of the input state $\ket{\psi_-} =\frac{1}{\sqrt{2}} \left(
\ket{01}-\ket{10} \right)$.  The authors of~\cite{soeda2021optimal}  proved that
the probability $7/8$ is optimal in the sense that it cannot be improved by the
use of any (even adaptive) discrimination strategy for the qubit case. Recent
results concerning discrimination of unknown unitary channels can be found
in~\cite{hashimoto2022comparison}.

In this work we are mainly interested in the discrimination of von Neumann
measurements, which are the most commonly used in practical applications. Von
Neumann measurements are a special subclass of general projective measurements,
in the sense that they are fine-grained measurements. These measuements are at
the crux of most schemes and protocols appearing in quantum information and
computing. Furthermore, thanks to the Naimark
construction~\cite{watrous2018theory} any measurement can be implemented as a
projective measurement on a larger Hilbert space. Minimum error discrimination
of von Neumann measurements was studied in
single-shot~\cite{puchala2018strategies} and
multiple-shot~\cite{puchala2021multiple} regimes. Asymmetric discrimination of
measurements was studied in~\cite{lewandowska2021optimal}. The advantage of
using entangled stated for single-shot discrimination between qubit measurements
was experimentally shown in~\cite{mikova2014optimal}. Application of process
POVMs for discrimination of quantum measurements can be found
in~\cite{ziman2009unambiguous,sedlak2014optimal}.

Possible extensions of the study of discrimination of quantum states and
measurements include discrimination between process matrices which are a
universal method defining a causal structure. For
instance~\cite{lewandowska2023strategies} provides an exact expression for the
optimal probability of correct distinction of process matrices. Finally, the
problem of discrimination of quantum states has also been studied in much
broader approach in the case of general contextual and non-contextual
theories~\cite{flatt2022contextual}.

In this work we study discrimination of unknown von Neumann measurements in
symmetric and asymmetric scenarios. We begin with preliminaries in
Section~\ref{sec:preliminaries} and detailed setups for symmetric and asymmetric
discrimination of quantum measurements will be presented therein. Next, we will
study the problem when one of the measurements is given without classical
description and we want to verify if the other measurement is a copy of the same
measurements or it is some other one. This problem will be studied in
Section~\ref{sec:discrimination_both_unknown_measurements}. Later, we will
assume that one copy of a measurement is given with its classical description
and we want to know whether the other measurement is a copy of the same
measurement. This problem will be studied in
Section~\ref{sec:one_fixed_measurement}. We will conclude in
Section~\ref{sec:conclusion}.

\section{Preliminaries}\label{sec:preliminaries}
Let $\XX$, $\YY$ and $\ZZ$ be Hilbert spaces where $\dim(\XX) = \dim(\YY) = d$,
$\dim(\ZZ) = d^2$. Let $\LL(\XX)$ be a set of linear operators acting from $\XX$
to $\XX$. Let $\UU(\XX)$ denote the set of unitary operators. Let $\DD(\XX)$
denote the set of quantum states, $\CC(\XX)$ denote the set of quantum channels
and $\TT(\XX)$ denote the set of quantum operations. For $U \in \UU(\XX)$, a
unitary channel will be denoted $\Phi_U(\cdot) \coloneqq U \cdot U^\dagger$. We
will also utilize two special quantum channels. The first one is the
depolarizing channel, which transforms every quantum state into the maximally 
mixed state. Formally, it is defined  for $X \in \LL(\XX)$ as
\begin{equation*}\label{eq:depolarizing_channel_df}
\Phi_*(X) \coloneqq \Tr(X) \frac{\1}{\dim(\XX)}.
\end{equation*}
The second one is the dephasing channel defined as
\begin{equation*}\label{eq:dephasing_channel_df}
\Delta(X) \coloneqq \sum_i \proj{i} X \proj{i}.
\end{equation*}

A quantum measurement is defined as a collection of positive semidefinite 
operators $\PP = \{ E_1, \ldots ,E_m\}$ which satisfy $\sum_{i=1}^m = \1$, where 
$\1$ is the identity operator. Operators $E_i$ are called \emph{effects}. 
When a quantum state $\rho$ is measured by the measurement $\PP$, then we  
obtain a label $i$ with probability $p(i) = \tr \left(E_i \rho\right)$ and the 
state $\rho$ ceases to exist. 
We will be 
particularly interested in von Neumann measurements, which effects are of the 
form $\PP_U = \{ \proj{u_1}, \ldots ,\proj{u_d}\}$, where $\ket{u_i} = U 
\ket{i}$ is the $i$-th column of the unitary matrix $U$. 
Every quantum measurement can be associated with a quantum channel 
\begin{equation}
\PP (\rho)= \sum_i \proj{i} \tr (E_i \rho),
\end{equation}
which outputs a diagonal matrix  where $i$-th entry on the diagonal corresponds to the probability of obtaining $i$-th label.

The Choi-Jamiołkowski representation of quantum operation $\Psi \in 
\mathcal{T}(\XX)$ is defined as $J \left(\Psi\right) \coloneqq \left(\Psi 
\otimes \1_\XX \right)(\projV{\1})$, where $\1_\XX$ is the identity channel on 
the space $\LL(\XX)$ and $\ketV{X}$ denotes the (lexicographical) vectorization 
of the operator $X$.

The diamond norm of a quantum operation $\Psi \in \TT(\XX)$ is defined as
\begin{equation}
\| \Psi \|_\diamond \coloneqq \max_{X: \|X\|_1=1} \left\| \left(\Psi \otimes 
\1_\XX\right) (X) \right\|_1,
\end{equation}
where $\1_\XX$ is, as previously, the identity channel on the space $\LL(\XX)$.
We will often use the bounds on the diamond 
norm~\cite{nechita2018almost,watrous2018theory}
\begin{equation}\label{eq:diamond_bounds}
\frac{1}{d} \|  J(\Psi) \|_1 \leq \| \Psi \|_\diamond 
\leq \|  \Tr_1 |J(\Psi)| \|.
\end{equation}

In this work we will focus on two approaches to discrimination of quantum 
measurements, which are symmetric and asymmetric discrimination.

\subsection{Symmetric discrimination}\label{sec:symmetric_scheme}
The goal of symmetric discrimination is to maximize the probability of correct
discrimination. It is also known as minimum-error discrimination. The schematic 
representation of symmetric discrimination of quantum measurements is depicted 
in Figure~\ref{fig:discrimination_of_measurements}.

\begin{figure}[h!]
\centering
\includegraphics[scale=1.4]{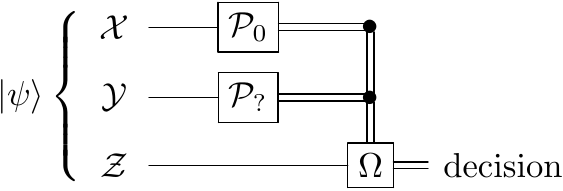}
\caption{Entanglement-assisted discrimination of von Neumann measurements}
\label{fig:discrimination_of_measurements}
\end{figure}

There are two black boxes. In the first black box there is a measurement
$\PP_0$. In the second box there is a measurement $\PP_?$, which can either be the 
same measurement $\PP_0$, or some other measurement, $\PP_1$. In other words 
$\PP_? \in \{ \PP_0, \PP_1 \}$. As the input state to the discrimination 
procedure we will take
a state $\ket{\psi} \in \XX \otimes \YY \otimes \ZZ$ and we will write $\psi
\coloneqq \proj{\psi}$ for the sake of simplicity. The measurement in the first
black box acts on the register $\XX$ and the second black box acts on the
register $\YY$. Basing on the outcomes of both measurements in the black boxes,
we prepare a final measurement on the register $\ZZ$. Having the output of the
final register, we make a decision whether $\PP_? = \PP_0$ or $\PP_? = \PP_1$.

To calculate the probability of the successful discrimination between quantum
measurements, we will make use of the Holevo-Helstrom 
theorem.~\cite{helstrom1969quantum}
 It states that the optimal probability of successful discrimination between any quantum 
channels $\Psi_0$ and $\Psi_1 \in \CC(\XX)$ is upper-bounded by
\begin{equation}\label{eq:HH}
p_{succ} \leq \frac{1}{2} + \frac{1}{4} \left\|  \Psi_0 - \Psi_1 
\right\|_\diamond
\end{equation}
and this bound can be saturated. This optimal probability of successful 
discrimination will be denoted $p^H_{succ} \coloneqq\frac{1}{2} + \frac{1}{4} 
\left\|  \Psi_0 - \Psi_1 \right\|_\diamond$.

\subsection{Asymmetric discrimination}\label{sec:asymmetric_scheme}
Asymmetric discrimination is based on the hypothesis testing. The null hypothesis
$H_0$ corresponds to the situation when $\PP_? = \PP_0$. The converse situation,
$\PP_? = \PP_1$ corresponds to alternative hypothesis $H_1$. The scheme of
asymmetric discrimination is as follows. We begin with preparing an input state
$\ket{\psi} \in \LL(\XX \otimes \YY \otimes \ZZ)$ and apply $\PP_0$ and $\PP_?$
on registers $\XX$ and $\YY$ respectively. Therefore, in the case when $\PP_? =
\PP_0$, we obtain as the output $\left(\PP_0 \otimes \PP_0 \otimes \1
\right)(\psi)$ and if $\PP_? = \PP_1$, then the output state yields $\left(\PP_0
\otimes \PP_1 \otimes \1 \right)(\psi)$. Having the output states, we prepare a
binary measurement $\left\{\Omega, \1 - \Omega\right\}$, where the effect
$\Omega$ accepts the null hypothesis and the effect $\1 - \Omega$ accepts the
alternative hypothesis.

The type I error (false positive) happens when we reject the correct null 
hypothesis.  In other words, this error happens when in both black boxes there were the same 
measurements, but we made a erroneous decision and said that in the measurements the black 
boxes were different. 
When the input state $\psi$ and measurement $\Omega$ are fixed, 
then the probability of making the type I error is given by the expression
\begin{equation}
\begin{split}
\pcondone & \coloneqq \Tr \left(  (\1 - \Omega) \left(\PP_0 \otimes \PP_0 
\otimes \1 \right)(\psi) \right) \\
& \phantom{:}= 1- \Tr \left( \Omega \left(\PP_0 \otimes \PP_0 \otimes \1 \right)(\psi) \right).
\end{split}
\end{equation}
The optimized probability of the type I error yields
\begin{equation}
\pone \coloneqq \min_{\psi, \Omega} \pcondone
\end{equation}
The probability of making the type II error (also known as false negative) for 
fixed input state and measurement equals 
\begin{equation}
\pcondtwo = \Tr \left( \Omega \left(\PP_0 \otimes \PP_1 \otimes \1 
\right)(\psi) \right) 
\end{equation}
and corresponds to the situation when we accept the null hypothesis when the 
alternative one was correct.  In other words, this error corresponds to the situation when the 
measurements in the black boxes were different but we made a mistake and said that in both black 
boxes there were the same measurements.
The optimized probability of making the type II error yields
\begin{equation}
\ptwo \coloneqq  \min_{\psi, \Omega} \pcondtwo.
\end{equation}

For both symmetric and asymmetric schemes we will study two cases. First we 
will assume that both measurements are unknown. Later, we will assume that we 
know the description of the reference measurement and the other measurement is 
unknown. We will be also interested whether the additional register is 
necessary for optimal discrimination. The summary of results is presented in the following table.

\begin{table}[h!]
\centering
\begin{tabular}{|p{1cm}||c|c|c|c|c|}
\hline
 & $p^H_{succ}$  & $p^H_{err}$ & $\pone$ & $\ptwo$ & ancilla\\
\hhline{|=|=|=|=|=|=|}
both unknown & $\frac{1}{2} + \frac{1}{2d}$  & $\frac{1}{2} - 
\frac{1}{2d}$ & $0$ & $1 - \frac{1}{d}$ & no \\
\hline
one fixed & $1 - \frac{1}{2d}$  & $\frac{1}{2d}$ & $0$ & $\frac{1}{d}$ & yes \\
\hline
\end{tabular}
\caption{Summary of for symmetric and asymmetric discrimination of unknown von 
Neumann measurements.}
\end{table}

We would like to remark that in the literature the task of \emph{certification}  often  refers to 
asymptotic asymmetric discrimination where objects are repeated in the iid manner, see for instance
~\cite{salek2022usefulness,wilde2020amortized,zhou2021asymptotic,cooney2016strong}. However, in 
this work we  will not consider the asymptotic case and focus on the scenario depicted in 
Fig.~\ref{fig:discrimination_of_measurements}.

\section{Discrimination of both unknown von Neumann 
measurements}\label{sec:discrimination_both_unknown_measurements}
In this section we will study a situation when we are given a von Neumann 
measurement $\PP_0$ but no classical description of it. This measurement will 
be our reference. We also have another von Neumann measurement $\PP_1$, which 
can be the same as the reference one, but it does not have to. In this section 
we will study the problem how to verify whether the second measurement is the 
same as the first one or not. Similar problem of discrimination of both unknown 
unitary channels was recently studied in~\cite{hashimoto2022comparison}.

\subsection{Symmetric discrimination}
We will be calculating the success probability for the discrimination of von
Neumann measurements in the scenario depicted in
Fig.~\ref{fig:discrimination_of_measurements}. Therefore, we will be actually
discriminating between $\PP_0 \otimes \PP_0$ and  $\PP_0 \otimes \PP_1$ in the
entanglement-assisted scenario. Thus, in order to use Holevo-Helstrom theorem we
will need to calculate the value of the diamond norm. As we do not have
classical description of either $\PP_0$ or $\PP_1$, we will assume that both
measurement are Haar-random, that is we will be discriminating between $\int 
\PP_U \otimes \PP_U dU$ and  $\int\PP_U \otimes \PP_V dU dV$. The probability 
of successful discrimination is formulated as the following theorem.

\begin{theorem}\label{thm:both_random_measurements_symmetric}
Let $\PP_0$ be a reference von Neumann measurement of dimension $d$ given 
without classical description. Let $\PP_1$ be another von Neumann measurement 
of the same dimension, also given without classical description. The optimal 
probability of correct verification if $\PP_1$ is the same as the reference 
channel in the scheme described in Subsection~\ref{sec:symmetric_scheme} 
equals  
\begin{equation}
p^H_{succ} = \frac{1}{2} + \frac{1}{2d}.
\end{equation}
\end{theorem}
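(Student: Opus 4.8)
The plan is to reduce the statement to a diamond-norm computation via the Holevo--Helstrom bound~\eqref{eq:HH}, and then to evaluate that norm in the Choi--Jamio{\l}kowski picture. Since we hold no classical description of $\PP_0,\PP_1$, we treat them as Haar random and measure performance by the Haar-averaged success probability of a fixed strategy (input state together with final POVM). For any fixed strategy the success probability is affine in the Choi matrix of the applied channel, so the Haar average may be pushed inside and the task becomes ordinary discrimination of the two \emph{averaged} channels $\Psi_0 \coloneqq \int \PP_U \otimes \PP_U \, \dd U$ and $\Psi_1 \coloneqq \int \PP_U \otimes \PP_V \, \dd U \, \dd V$. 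Because $\PP_U = \Delta \circ \Phi_{U^\dagger}$ and the Haar twirl of the unitary part is the depolarizing channel, one has $\int \PP_U \, \dd U = \Phi_*$, so $\Psi_1 = \Phi_* \otimes \Phi_*$ with Choi matrix $\tfrac{1}{d^2}\1$. By~\eqref{eq:HH} it then suffices to establish
\begin{equation}
\left\| \Psi_0 - \Psi_1 \right\|_\diamond = \frac{2}{d}.
\end{equation}

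To get $J(\Psi_0)$ I would first record the single-copy Choi matrix $J(\PP_U) = \sum_k \proj{k} \otimes \proj{\overline{u_k}}$, where $\ket{\overline{u_k}} = \overline{U}\ket{k}$ is the entrywise conjugate of the $k$-th column of $U$. Writing $J(\PP_U \otimes \PP_U)$ as the reordering of $J(\PP_U)^{\otimes 2}$ and integrating leaves a single nontrivial ingredient, the second-moment Haar integral $\int \proj{\overline{u_k}} \otimes \proj{\overline{u_l}} \, \dd U = \int V^{\otimes 2}\left(\proj{k}\otimes\proj{l}\right)(V^\dagger)^{\otimes 2} \, \dd V$, where I used invariance of the Haar measure under $V = \overline{U}$. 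Twirling $V^{\otimes 2}$ lands in the span of the identity and the swap $\mathbb{F}$, so each such block equals $\alpha_{kl}\1 + \beta_{kl}\mathbb{F}$ with coefficients fixed by $\Tr(\proj{k}\otimes\proj{l}) = 1$ and $\Tr(\mathbb{F}(\proj{k}\otimes\proj{l})) = \delta_{kl}$. Consequently $D \coloneqq J(\Psi_0 - \Psi_1)$ is block diagonal with respect to the classical output labels $(k,l)$, with each block $B_{kl} = \alpha_{kl}\1 + \beta_{kl}\mathbb{F} - \tfrac{1}{d^2}\1$ acting on $\XX \otimes \YY$ and simultaneously diagonalized by the symmetric/antisymmetric decomposition.

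For the upper bound I would invoke the right inequality of~\eqref{eq:diamond_bounds}, namely $\|\Psi_0 - \Psi_1\|_\diamond \le \big\| \Tr_1 |D| \big\|$. Since $D$ is block diagonal the absolute value acts blockwise, and tracing out the output registers yields $\sum_{kl}|B_{kl}|$, which is a scalar on each of the symmetric and antisymmetric subspaces. A short count of eigenvalues gives the scalar $\tfrac{2(d-1)}{d(d+1)}$ on the symmetric subspace and $\tfrac{2}{d}$ on the antisymmetric one, so the operator norm, hence the upper bound, equals $\tfrac{2}{d}$. For the matching lower bound I would feed a purely \emph{antisymmetric} state $\ket{\psi}\in\XX\otimes\YY$ into the two boxes, using no auxiliary register $\ZZ$ (for $d=2$ this is exactly the singlet $\ket{\psi_-}$ of the introduction). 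Both channels output classical label distributions, and a direct check shows the outcome probabilities depend on $\ket{\psi}$ only through its symmetric weight; the antisymmetric choice makes all diagonal outcomes vanish and sends the trace distance of the two output distributions to $\tfrac{2}{d}$, so that $p_{succ} = \tfrac12 + \tfrac14\cdot\tfrac{2}{d} = \tfrac12 + \tfrac{1}{2d}$. This simultaneously shows the bound is tight and that the ancilla $\ZZ$ is unnecessary.

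The main obstacle is the careful bookkeeping of the tensor reorderings relating $J(\PP_U\otimes\PP_U)$ to $J(\PP_U)^{\otimes 2}$ together with the second-moment Haar integral; once $D$ is in its $\1$-and-$\mathbb{F}$ block form, the symmetric/antisymmetric diagonalization needed for both the operator norm $\|\Tr_1|D|\|$ and the optimal input state is routine.
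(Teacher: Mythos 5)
Your proposal is correct and follows the same overall skeleton as the paper: reduce to the Holevo--Helstrom bound for the two Haar-averaged channels, bound the diamond norm from above by $\left\| \Tr_{\mathrm{out}} |J| \right\|$ using Eq.~\eqref{eq:diamond_bounds}, and saturate it with an antisymmetric two-register input, which also shows the ancilla $\ZZ$ is unnecessary. The one place where you genuinely diverge is the computation of $|J|$. The paper writes $J$ globally in terms of $\1$, $S$ and $T=\Delta(S)$ and then, via the appendix lemma, exhibits a unitary $W=(2T-\1)\otimes S$ with $(WJ)^2=J^2$, so that $|J|=WJ$ by polar decomposition; the absolute value is never diagonalized. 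You instead exploit the fact that the measurement outputs are classical, so $J$ is block diagonal in the output labels $(k,l)$ with each block in $\SPAN\{\1,\mathbb{F}\}$, and you diagonalize blockwise on the symmetric and antisymmetric subspaces. Your eigenvalue bookkeeping checks out: the blocks give $-\tfrac{1}{d^2(d+1)}$ ($k\neq l$) and $\tfrac{d-1}{d^2(d+1)}$ ($k=l$) on the symmetric subspace, summing to $\tfrac{2(d-1)}{d(d+1)}$, and $\tfrac{1}{d^2(d-1)}$ and $-\tfrac{1}{d^2}$ on the antisymmetric subspace, summing to $\tfrac{2}{d}$, which is indeed the larger of the two. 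Your route buys an explicit spectral decomposition of $J$ and avoids having to guess and verify the polar unitary $W$ (the content of the paper's Appendix~\ref{app:lemma}); the paper's route avoids any eigenvalue count and stays entirely within algebra of $\1$, $S$, $T$. Both are complete; just make sure in the write-up that you fix the ordering convention in $J(\Psi)=(\Psi\otimes\1)(\projV{\1})$ so that $\Tr_1$ in Eq.~\eqref{eq:diamond_bounds} really traces out the output registers, as your blockwise sum $\sum_{kl}|B_{kl}|$ presupposes.
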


\begin{remark}
The above theorem is a direct application of the Holevo-Helstrom 
Theorem (see Eq.~\eqref{eq:HH}) for discrimination between channels $\int 
\PP_U \otimes \PP_U dU$ and  $\int\PP_U \otimes \PP_V dU dV$, that is
\begin{widetext}
\begin{equation}
\begin{split}
p^H_{succ} &= \frac{1}{2} + \frac{1}{4} 
\left\|  \int 
\PP_U \otimes \PP_U dU- \int\PP_U \otimes \PP_V dU dV \right\|_\diamond
= \frac{1}{2} + \frac{1}{2d}.
\end{split}
\end{equation}
\end{widetext}
\end{remark}

\begin{proof}

Let $U \in \mathcal{U}(\XX), \ V \in \mathcal{U}(\YY)$ be unitary operators and
$\dim(\XX) = \dim(\YY) = d$. The probability of successful discrimination is
given by the Holevo-Helstrom theorem. To calculate this probability (Eq.
\eqref{eq:HH}), we need to calculate the diamond norm distance between the
averaged channels
\begin{equation}
\left\| \int \PP_U \otimes \PP_U dU  - 
\int \PP_U \otimes \PP_V dU dV  \right\|_\diamond. 
\end{equation}
As the von Neumann measurement $\PP_U$ can be seen as $\Delta 
\Phi_{U^\dagger}$, where 
$\Delta$ is a dephasing channel defined in Eq.~\eqref{eq:dephasing_channel_df},
we will actually be discriminating between
\begin{equation}
\begin{split}
& \int (\Delta \otimes \Delta)(\Phi_{U^\dagger} \otimes \Phi_{U^\dagger}) dU 
\quad \mathrm{and} \quad \\
& \int (\Delta \otimes \Delta)(\Phi_{U^\dagger} \otimes \Phi_{V^\dagger}) dU dV. 
\end{split}
\end{equation}

Using~\cite{puchala2017symbolic,collins2006integration} we calculate the 
Choi-Jamiołkowski representations of averaged unitary channels
\begin{widetext}
\begin{equation}\label{eq:j1_and_j2_introduced}
\begin{split}
&J \left(   \int \Phi_U \otimes \Phi_U dU \right)
=  \frac{1}{d^2-1} \left( \1 \otimes \1 + S \otimes S
\right) - \frac{1}{d(d^2-1)} \left(  S \otimes \1 + \1 \otimes S  \right), \\
&J\left(  \int \Phi_U \otimes \Phi_V dU dV  \right)
=  \frac{1}{d^2} \1 \otimes \1,
\end{split}
\end{equation}
\end{widetext}
where, unless said otherwise, $S$ is the Swap matrix of two $d$-dimensional systems and 
identity matrices are of dimension $d^2$.

Using the above, we can calculate the Choi-Jamiołkowski representations of the 
averaged measurements, that is
\begin{widetext}
\begin{equation}\label{eq:J_0}
J \left(\int \PP_U \otimes \PP_U dU\right) = 
\frac{1}{d^2-1} \left( 
\1 \otimes \left(   \1 - \frac{1}{d} S \right)
+ T \otimes \left( S - \frac{1}{d} \1 \right)
 \right)
\end{equation}
\end{widetext}
where $T \coloneqq \Delta(S)$, and
\begin{equation}
J \left(\int \PP_U \otimes \PP_V dUdV\right) =  
\frac{1}{d^2} \1 \otimes \1.
\end{equation}

For later convenience, we introduce $J \in \LL\left( \XX'\otimes \YY' \otimes \XX  \otimes \YY \right)$ as 
a difference of Choi  matrices of both randomized measurements, where the registers $\XX\otimes 
\YY$ correspond to input spaces and $\XX'\otimes \YY'$ correspond to the output spaces
\begin{equation}\label{eq:def_J_for_both_unknown_measurements}
\begin{split}
J &\coloneqq 
J \left(\int \PP_U \otimes \PP_U dU\right)
- J \left(\int \PP_U \otimes \PP_V dUdV\right) \\
&= 
\frac{1}{d^2-1} \left( 
\1 \otimes \left( \frac{1}{d^2}  \1 - \frac{1}{d} S \right)
+ T \otimes \left( S - \frac{1}{d} \1 \right)
 \right).
\end{split}
\end{equation}

The remaining part of the proof goes as follows. We will first calculate the
upper bound on the diamond norm $\| \int \PP_U \otimes \PP_U dU  - \int \PP_U
\otimes \PP_V dU dV \|_\diamond \leq \left\| \Tr_{\XX' \otimes \YY'} |J|
\right\|$ from Eq.~\eqref{eq:diamond_bounds} and next find an input state that
saturates it.

To calculate the upper bound we first need to find $|J| = \sqrt{J^\dagger J}$.
From Lemma~\ref{lm:j2=wj2_for_measurements} in Appendix~\ref{app:lemma},  taking
$W \coloneqq(2T-\1) \otimes S$ it holds that $(WJ)^2 = J^2$, and this gives a
polar decomposition of $J$ and we have $\left\| \Tr_{\XX' \otimes \YY'} |J|
\right\| = \left\| \Tr_{\XX' \otimes \YY'} WJ \right\|$. Hence:
\begin{widetext}
\begin{equation}
\begin{split}
\Tr_{\XX' \otimes \YY'} (WJ) 
&= \frac{1}{d^2-1} \Tr_{\XX' \otimes \YY'} \left(  \frac{1}{d} \1 \otimes \1  - 
\frac{1}{d^2} \1 \otimes S + \frac{d-2}{d} T \otimes \1 - \frac{d-2}{d^2} T 
\otimes S \right) \\
&= \frac{1}{d^2-1} \left(  \frac{d^2}{d}  \1  - \frac{d^2}{d^2} S + 
\frac{d(d-2)}{d}  \1 - \frac{d(d-2)}{d^2}  S \right) \\
&= \frac{1}{d^2-1} \left(  (2d-2)  \1  - \frac{2d-2}{d} S  \right) 
= \frac{2}{d+1} \left( \1  - \frac{1}{d} S  \right) \\
\end{split}
\end{equation}
\end{widetext}
and eventually we have
\begin{equation}
\begin{split}
\left\|  \Tr_{\XX' \otimes \YY'} |J| \right\| 
& = \left\Vert   \frac{2}{d+1} \left( \1  - \frac{1}{d} S  \right) \right\Vert \\
& = \frac{2}{d+1} \left\Vert  \1  - \frac{1}{d} S  \right\Vert
= \frac{2}{d}.
\end{split}
\end{equation}

Now we proceed to proving that the upper bound is saturated.
Let us take an input state $\LL(\XX \otimes \YY) \ni \rho \coloneqq \proj{a} $ which satisfies
\begin{equation}
S \rho = -\rho, \quad   \rho^\top = \rho .
\end{equation}
As the vector $\ket{a}$ we can take any $\XX \otimes \YY \ni \ket{a} =
\frac{1}{\sqrt{2}}\left(\ket{ij} -\ket{ji}\right)$.

We calculate
\begin{widetext}
\begin{equation}
\begin{split}
\Tr_{\XX\otimes \YY}  
J\left(\1\otimes \rho^\top \right) &= 
\frac{1}{d^2-1}   \Tr_{\XX \otimes \YY}  
\left( 
\1 \otimes \left( \frac{1}{d^2}  \1 - \frac{1}{d} S \right)
+ T \otimes \left( S - \frac{1}{d} \1 \right)
 \right) 
\left(\1\otimes \rho \right)  \\
&= \frac{1}{d^2-1} \left( 
\1  \cdot\frac{1}{d} \Tr \left(  1 + \frac{1}{d}  \right) \rho
- T  \cdot \Tr \left( 1 + \frac{1}{d}  \right) \rho
 \right)   \\
&= \frac{1}{d^2-1} \left( 
\frac{1}{d} \left(  1 + \frac{1}{d}  \right) \1
-  \left( 1 + \frac{1}{d}  \right) T
 \right)   \\
&= \frac{1}{d(d-1)} \left( \1 - T\right)  - 
\frac{1}{d^2}  \1 .
\end{split}
\end{equation}
\end{widetext}
Simple calculations show that the trace norm of the above matrix equals $\frac{2}{d}$,
hence the upper bound is saturated. The construction of the input stat $\rho$ also
shows that this upper bound can be achieved without an additional register.

%
\end{proof}

\subsection{Asymmetric discrimination}
In the asymmetric discrimination we will consider two types of errors
separately. We would like to verify whether measurements in both black boxes are
the same (which corresponds to $H_0$ hypothesis) or they are different (which
corresponds to $H_1$ hypothesis). Formally, when the measurement in the first
black box, $\PP_0$, is unknown, we say that $\PP_0 = \int \PP_U dU$. The
measurement in the second black box can be either the same as in the first black
box ($\PP_? = \PP_0$) or it can be some other measurement, that is  $\PP_? =
\int \PP_V dV$. When performing asymmetric discrimination, we prepare an input
state $\ket{\psi} \in \XX \otimes \YY \otimes \ZZ$. If in both black boxes there
were the same measurements, then the output state yields $\rhozeropsi = \int
\left(\PP_U \otimes \PP_U \otimes \1_\ZZ \right) (\psi) dU. $ If the
measurements in the black boxes were different, when the output state is $
\rhoonepsi = \int \left(\PP_U \otimes \PP_V \otimes \1_\ZZ \right) (\psi) dU dV.
$ Next, we measure the output state by a binary measurement $\{\Omega, \1 -
\Omega \}$. We will focus on the case when the type I error cannot occur. The
optimal probability of the type II error is formulated as the following theorem.

\begin{theorem}\label{thm:both_random_asymmetric}
Let $\PP_0$ be a reference von Neumann measurement of dimension $d$ given 
without classical description. Let $\PP_1$ be another von Neumann measurement 
of the same dimension, also given without classical description. Consider the 
hypotheses testing problem described in Subsection~\ref{sec:asymmetric_scheme}.
Let $H_0$ hypothesis state that $\PP_? = \PP_0$ and let the alternative $H_1$ 
hypothesis state that $\PP_? = \PP_1$.
If no false positive error can occur, then the optimal probability of false 
negative error yields
\begin{equation}
\ptwo = 1 - \frac{1}{d}.
\end{equation}
Moreover, no additional register is needed to obtain this value.
\end{theorem}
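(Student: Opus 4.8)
The plan is to first reduce the optimization over binary measurements to a single canonical choice, then compute the two output states by Haar averaging, and finally read off the claimed value from the support structure of the resulting operators.

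First I would use the no-false-positive constraint. Requiring $\pcondone = 0$ means $\Tr(\Omega\,\rhozeropsi) = 1$, and since $0 \le \Omega \le \1$ while $\rhozeropsi$ is a state, this forces $\Omega$ to act as the identity on $\supp(\rhozeropsi)$; equivalently $\Omega \ge P_0$, where $P_0$ is the projector onto $\supp(\rhozeropsi)$. Because $\rhoonepsi \ge 0$, the type II error $\pcondtwo = \Tr(\Omega\,\rhoonepsi)$ is then minimized, for each fixed $\psi$, by taking $\Omega = P_0$. Hence the whole problem collapses to $\ptwo = \min_\psi \Tr(P_0\,\rhoonepsi)$.

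Next I would compute the two states explicitly. Averaging the two measurements independently gives $\int \PP_V dV = \Phi_*$, so $\rhoonepsi = \frac{\1_\XX}{d} \otimes \frac{\1_\YY}{d} \otimes \sigma$ with $\sigma \coloneqq \Tr_{\XX \otimes \YY}\psi$. For $\rhozeropsi$ the two boxes share the same $U$, so the relevant object is the two-copy twirl $\int (U \otimes U)(\proj{i} \otimes \proj{j})(U \otimes U)^\dagger \, dU$, which lies in $\SPAN\{\1, S\}$ and takes one value for $i = j$ and another for $i \ne j$; this is exactly the $T = \Delta(S)$ structure already appearing in the symmetric proof. Consequently $\rhozeropsi$ is block-diagonal on the label registers $\XX' \otimes \YY'$: all diagonal blocks ($i = j$) equal a single operator $C_=$ on $\ZZ$ and all off-diagonal blocks ($i \ne j$) equal a single operator $C_{\ne}$, each expressible through $\sigma$ and $R \coloneqq \Tr_{\XX \otimes \YY}[(S \otimes \1_\ZZ)\psi]$. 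Since $P_0$ inherits this block form, writing $\Pi_=, \Pi_{\ne}$ for the support projectors of $C_=, C_{\ne}$ yields $\ptwo = \frac{d-1}{d}\Tr(\Pi_{\ne}\sigma) + \frac{1}{d}\Tr(\Pi_=\sigma)$.

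The crux, and the step I expect to be the main obstacle, is showing that the off-diagonal blocks always carry full support, i.e. $\Pi_{\ne} = \Pi_\sigma$ (the support projector of $\sigma$), so that they contribute the full weight $\Tr(\Pi_{\ne}\sigma) = 1$ no matter which input is used. I would establish this from the operator inequality $-\sigma \le R \le \sigma$, which follows from $-\1 \le S \le \1$ together with positivity of $\psi$ (using that $\Tr[PQ] \ge 0$ for positive $P,Q$ to handle the partial trace). Since $(d^2-1)C_{\ne} = \sigma - \frac{1}{d}R$ and the coefficient $\frac{1}{d}$ is strictly less than $1$, this sandwiches $C_{\ne}$ between two strictly positive multiples of $\sigma$, forcing $\supp(C_{\ne}) = \supp(\sigma)$. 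I would then conclude $\ptwo = \frac{d-1}{d} + \frac{1}{d}\Tr(\Pi_=\sigma) \ge 1 - \frac{1}{d}$, with equality precisely when $C_= = 0$, that is $R = -\sigma$. Finally, for achievability and the no-ancilla claim I would take a trivial register $\ZZ$ and the antisymmetric input $\psi = \proj{a}$ with $\ket{a} = \frac{1}{\sqrt{2}}(\ket{ij} - \ket{ji})$: then $S\ket{a} = -\ket{a}$ makes $C_= = 0$ while $C_{\ne}$ is automatically full, so the diagonal contribution vanishes and $\ptwo = 1 - \frac{1}{d}$ is saturated without any additional register.
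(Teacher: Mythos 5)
Your proposal is correct, and the achievability half coincides with the paper's: both use the antisymmetric two-register input $\ket{a}=\tfrac{1}{\sqrt 2}(\ket{ij}-\ket{ji})$ with trivial $\ZZ$, and your $\Omega=P_0$ is exactly the paper's $\1-T$ in that case. Where you genuinely diverge is the optimality argument. The paper does not analyze the constrained optimization directly: it invokes the general inequality $p_e^H \le \tfrac12(\pcondone+\pcondtwo)$ (Appendix~\ref{app:inequality_between_errors}) together with the value $p_e^H=\tfrac12-\tfrac{1}{2d}$ from Theorem~\ref{thm:both_random_measurements_symmetric}, and observes that the constructed pair already meets that bound with equality, so no zero-false-positive strategy can do better. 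You instead prove the lower bound structurally: the constraint $\pcondone=0$ forces $\Omega\ge P_0$ with $P_0$ the support projector of $\rhozeropsi$; the two-fold twirl makes $\rhozeropsi$ block-classical with only two distinct blocks $C_=\propto\sigma+R$ and $C_{\ne}\propto\sigma-\tfrac1d R$; and the sandwich $-\sigma\le R\le\sigma$ (which is valid, since $\Tr_{\XX\otimes\YY}[(A\otimes\1)\psi]\ge 0$ for $A,\psi\ge 0$) pins $\supp(C_{\ne})=\supp(\sigma)$, giving $\pcondtwo\ge\tfrac{d-1}{d}$ for \emph{every} input. Your route is longer but self-contained — it does not lean on Theorem~\ref{thm:both_random_measurements_symmetric} or the appendix inequality — and it yields strictly more information: a characterization of all optimal inputs as exactly those supported on the antisymmetric subspace of $\XX\otimes\YY$ tensored with $\ZZ$ (i.e.\ $R=-\sigma$), from which the "no ancilla needed" claim also falls out. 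The paper's route is shorter and recycles machinery already established for the symmetric case. Both are sound.
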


\begin{proof}
As the input state to the discrimination procedure we take some state 
$\ket{\psi} \in \XX \otimes \YY$. Note that we assumed that this state is only 
on two registers. 
In this proof we will calculate the probability of the type II error assuming 
that the register $\ZZ$ is trivial. Later, we will prove that this gives the 
optimal probability and the additional register is not needed. 

If both measurements are the same, then the output state will be
\begin{equation}
\rhozeropsi = \int \left(\PP_U \otimes \PP_U\right) (\psi) dU. 
\end{equation}
If the measurements in the black boxes are different, then the output 
state will be
\begin{equation}
\rhoonepsi = \int \left(\PP_U \otimes \PP_V\right) (\psi) dU dV.
\end{equation}

We begin with calculating $\int \left(\PP_U \otimes \PP_U\right) (\psi) dU$ 
by the use of formula for recovering the action of a quantum channel given its 
Choi matrix.
Using the formula for the Choi matrix from Eq.~\eqref{eq:J_0}  and using the 
notation $T \coloneqq \Delta(S)$ we calculate

\begin{widetext}
\begin{equation}
\begin{split}
\rhozeropsi&= \Tr_{\ZZ} \left( J \left(  
\int \PP_U \otimes \PP_U dU 
\right) \left(\1 \otimes \psi^\top \right)  \right) \\
&= \frac{1}{d (d^2-1)} \left(
\left(d - \tr \left(S \psi^\top\right) \right) \1 
+  \left( d \tr \left(S \psi^\top\right)  - 1 \right) T
\right). \\
\end{split}
\end{equation}
\end{widetext}

Let us take the input state to be antisymmetric, that is it satisfies 
$\tr \left( S \psi^\top\right)=-1$. We calculate
\begin{equation}
\begin{split}
\rhozeropsi
&= \frac{1}{d (d^2-1)} 
\left( \left( d + 1 \right) \1 -  \left( d + 1 \right) T \right) \\
& = \frac{1}{d (d-1)} \left(  \1 -  T \right). 
\end{split}
\end{equation}
By similar calculation, using the antisymmetric input state we have
\begin{equation}
\begin{split}
\rhoonepsi& = \Tr_{\ZZ} \left( J \left(  
\int \PP_U \otimes \PP_V dU 
\right) \left(\1 \otimes \psi^\top \right)  \right) \\
& = \Tr_{\ZZ} \left(  \left(\frac{1}{d^2} \1 \otimes \1\right)
\left(\1 \otimes \psi^\top \right)  \right) \\
&= \frac{1}{d^2} \Tr_{\ZZ} \left(\1 \otimes \psi^\top   \right) 
=\frac{1}{d^2} \1.
\end{split}
\end{equation}
As the measurement effect we take $\Omega \coloneqq \1 - T$.
Hence
\begin{equation}
\begin{split}
\pcondone &= 1- \tr\left( \Omega \rhozeropsi \right)\\
& =1- \frac{1}{d (d-1)} 
\tr \left(\left(  \1 - T \right) \left(  \1 - T \right) \right)\\
& = 0, 
\end{split}
\end{equation}
and
\begin{equation}
\begin{split}
\pcondtwo &= \tr\left( \Omega \rhoonepsi \right) = \frac{1}{d^2} \tr 
\left(\1 - T \right) \\
& = \frac{d(d-1)}{d^2} = 1-\frac{1}{d}.
\end{split}
\end{equation}

From Appendix~\ref{app:inequality_between_errors} we know that the probability
of erroneous discrimination is the symmetric scheme (which equals
$1-p^H_{succ}$) is never bigger than the arithmetic mean of probabilities of the
type I and type II errors. As
\begin{equation}
\frac{1}{2} \left(\pcondone + \pcondtwo\right) = \frac{1}{2} - \frac{1}{2d},
\end{equation}
then we conclude that our value of $\pcondtwo = 1-\frac{1}{d}$ is optimal and 
hence $\ptwo = \pcondtwo$.

Finally, note  that the optimal value $\ptwo$ can be achieved for the input state 
$\ket{\psi} \in \XX \otimes \YY$, that is when the register $\ZZ$ is trivial. 
Hence, the additional register is not needed for asymmetric discrimination in 
this case.
\end{proof}

\section{Discrimination between a fixed and unknown von Neumann 
measurements}\label{sec:one_fixed_measurement}

In this section we assume that instead of the unknown reference measurement 
from the previous section, we are given $\PP_0$ as a fixed von Neumann 
measurement $\PP_U$. 
We will begin with studying symmetric discrimination and later proceed to studying the asymmetric discrimination scheme.

\subsection{Symmetric discrimination}
Now we focus on the situation when we want to distinguish between a fixed von 
Neumann measurement $\PP_U$ and a Haar-random measurement $\int \PP_V dV$. The probability of successful discrimination is formulated as a theorem. 

\begin{theorem}\label{thm:one_random_measurement_symmetric}
Let $\PP_0 = \PP_U$ be a reference von Neumann measurement of dimension $d$. 
Let $\PP_1$ be another von Neumann measurement 
of the same dimension, but given without classical description. The optimal 
probability of correct verification whether $\PP_1 = \PP_0$ or $\PP_1 \neq 
\PP_0$ in the scheme described in Subsection~\ref{sec:symmetric_scheme} 
equals  
\begin{equation}
p^H_{succ}  = 1- \frac{1}{2d}.
\end{equation}
\end{theorem}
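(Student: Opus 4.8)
The plan is to follow the same route as in Theorem~\ref{thm:both_random_measurements_symmetric}: invoke the Holevo--Helstrom theorem (Eq.~\eqref{eq:HH}) and reduce the claim to computing the diamond norm distance between the two channels being discriminated, here
\begin{equation}
\left\| \PP_U \otimes \PP_U - \PP_U \otimes \int \PP_V \dd V \right\|_\diamond ,
\end{equation}
which I would show equals $2-\tfrac{2}{d}$, so that $p^H_{succ} = \tfrac12 + \tfrac14\left(2-\tfrac2d\right) = 1 - \tfrac1{2d}$.

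First I would simplify the unknown measurement. Since the Haar average of a rank-one projector obeys $\int \proj{v_i}\dd V = \tfrac1d\1$, the averaged measurement collapses to the depolarizing channel, $\int \PP_V \dd V = \Phi_*$. Next, using $\PP_U = \Delta\,\Phi_{U^\dagger}$ together with invariance of the diamond norm under precomposition by the unitary channel $\Phi_U \otimes \Phi_U$, I would reduce to $U=\1$: because $\PP_U\Phi_U = \Delta$ and $\Phi_*\Phi_U = \Phi_*$, the distance above equals $\left\| \Delta\otimes\Delta - \Delta\otimes\Phi_* \right\|_\diamond = \left\| \Delta\otimes(\Delta - \Phi_*) \right\|_\diamond$. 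This step, stripping away the arbitrary basis, is the main conceptual move; the rest is computation.

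For the upper bound I would use the Choi-based estimate $\|\Psi\|_\diamond \le \|\Tr_{\XX'\otimes\YY'} |J(\Psi)|\|$ from Eq.~\eqref{eq:diamond_bounds}. Writing $A \coloneqq J(\Delta) = \sum_i \proj{i}\otimes\proj{i}$ and $B \coloneqq J(\Delta) - J(\Phi_*) = A - \tfrac1d\1$, the difference Choi matrix factorises as $J = A\otimes B$. As $A$ is a projector we have $A=|A|$, and a polar decomposition of $B$ comes from the reflection $2A-\1$, which plays the role of $W$ in Theorem~\ref{thm:both_random_measurements_symmetric}: a short computation gives $(2A-\1)B = \tfrac1d\1 + \tfrac{d-2}{d}A = |B|$, hence $|J| = A\otimes|B|$. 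Tracing out the output registers, $\Tr_{\XX'}A = \1_\XX$ and $\Tr_{\YY'}|B| = \tfrac{2(d-1)}{d}\1_\YY$, so $\|\Tr_{\XX'\otimes\YY'}|J|\| = \tfrac{2(d-1)}{d} = 2-\tfrac2d$, giving the claimed upper bound on $p^H_{succ}$.

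It then remains to saturate this bound. Because the difference factorises as $\Delta\otimes(\Delta-\Phi_*)$, any product input $\rho_\XX\otimes\sigma_{\YY\ZZ}$ yields output $\Delta(\rho)\otimes\left((\Delta-\Phi_*)\otimes\1\right)(\sigma)$ of trace norm $\|\left((\Delta-\Phi_*)\otimes\1\right)(\sigma)\|_1$, so it suffices to saturate $\|\Delta-\Phi_*\|_\diamond$. Taking $\sigma$ to be a maximally entangled state between $\YY$ and $\ZZ$ produces an output with eigenvalues $\tfrac{d-1}{d^2}$ (with multiplicity $d$) and $-\tfrac1{d^2}$, whose trace norm is exactly $2-\tfrac2d$, matching the bound. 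I expect the only delicate points to be the Haar-averaging identity for $\int\PP_V\dd V$ and, as before, verifying the polar decomposition $|J|=(2A-\1)J$; the factorised form of $J$ makes the latter routine, so the genuine work lies in the unitary reduction and the explicit saturating input.
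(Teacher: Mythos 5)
Your proposal is correct and follows essentially the same route as the paper: reduce to $U=\1$, factor out the first register, and evaluate $\left\| \Delta - \Phi_* \right\|_\diamond = 2-\tfrac{2}{d}$ via the Choi-matrix bounds of Eq.~\eqref{eq:diamond_bounds}. Your explicit maximally entangled saturating input is exactly the certificate underlying the paper's lower bound $\tfrac{1}{d}\|J(\Psi)\|_1$, and you merely spell out the polar-decomposition and eigenvalue details that the paper leaves implicit.
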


\begin{proof}
Without loss of generality we can take $U = \1$.  
To calculate the bound from Holevo-Helstrom theorem~\eqref{eq:HH}, we want to 
calculate the diamond norm distance between quantum measurements
\begin{equation}
\left\| \PP_\1 \otimes \PP_\1 - \PP_\1 \otimes \int \PP_V dV \right\|_\diamond.
\end{equation}
Using properties of the diamond norm~\cite{watrous2018theory} we calculate 
\begin{widetext}
\begin{equation}
\begin{split}
\left\| \PP_\1 \otimes \PP_\1 - \PP_\1 \otimes \int \PP_V dV \right\|_\diamond
&= \left\| \PP_\1 \otimes \left(\PP_\1 - \int \PP_V dV\right) \right\|_\diamond 
\\
&= \left\| \PP_\1 \right\|_\diamond  
\left\|  \PP_\1 - \int \PP_V dV \right\|_\diamond \\
&= \left\|  \PP_\1 - \int \PP_V dV \right\|_\diamond. 
\end{split}
\end{equation}
\end{widetext}
To do this, we use the fact that $\PP_V = \Delta \Phi_{V^\dagger}$.
Moreover, we know that $J(\Phi_\1) = \projV{\1}$ and $J(\Phi_\star) = \1/d$,  
where $\Phi_\star$ is the depolarizing channel defined in 
Eq.~\eqref{eq:depolarizing_channel_df}.
Therefore, calculating directly both lower and upper bounds for the diamond norm from 
Eq.~\eqref{eq:diamond_bounds}, 
we obtain
\begin{equation}
\left\| \PP_\1 -  \int \PP_V dV \right\|_\diamond
= 2-\frac{2}{d}.
\end{equation}
Finally
\begin{equation}
p^H_{succ} = \frac{1}{2} + \frac{1}{4} 
\left( 2-\frac{2}{d} \right)= 1- \frac{1}{2d}.
\end{equation}
\end{proof}

\subsection{Asymmetric discrimination}
In this subsection we will focus on asymmetric discrimination between a fixed 
von Neumann measurement $\PP_U$ and a Haar-random measurement $\PP_V$. 
We will be interested in the scenario when the false positive error cannot occur. The optimized probability of the false negative error is formulated as a theorem.

\begin{theorem}
Let $\PP_0 = \PP_U$ be a fixed von Neumann measurement and $\PP_1$ be some other
von Neumann measurement given without classical description. Let the $H_0$
hypothesis correspond to the case when $\PP_? = \PP_0$ and $H_1$ hypothesis
correspond to the case when $\PP_? = \PP_1$. Consider the discrimination 
scheme described in Subsection~\ref{sec:asymmetric_scheme}. If no false 
positive error can occur, then the optimal probability of false negative error 
yields
\begin{equation}
\ptwo = \frac{1}{d}.
\end{equation}
\end{theorem}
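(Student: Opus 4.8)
The plan is to mirror the structure of the proof of Theorem~\ref{thm:both_random_asymmetric}: first exhibit a concrete input state and acceptance effect realising a false negative probability of $\frac1d$ with no false positive, and then invoke the symmetric result together with the error inequality of Appendix~\ref{app:inequality_between_errors} to certify that this value cannot be beaten. As in the proof of Theorem~\ref{thm:one_random_measurement_symmetric}, I would set $U=\1$ without loss of generality, so that the fixed box implements $\PP_\1 = \Delta$ and the unknown box implements, after Haar averaging, the completely depolarizing channel $\int \PP_V dV$ which sends every state to $\frac1d \1$.

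For achievability I would bring the ancilla register $\ZZ$ into play and feed in $\ket{\psi} = \ket{0}_\XX \otimes \frac{1}{\sqrt d}\sum_j \ket{j}_\YY \ket{j}_\ZZ$, so that the $\XX$ part is inert (it is dephased identically under both hypotheses) while $\YY\ZZ$ carry a maximally entangled pair. Under $H_0$ the two dephasings collapse the entangled pair to the classically correlated state, giving $\rhozeropsi = \proj{0}_{\XX} \otimes \frac1d T_{\YY\ZZ}$ with $T = \Delta(S) = \sum_j \proj{jj}$, whereas under $H_1$ the completely depolarizing action on $\YY$ yields $\rhoonepsi = \proj{0}_{\XX} \otimes \frac{1}{d^2}\1_{\YY\ZZ}$. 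Choosing the projector effect $\Omega = \proj{0}_{\XX} \otimes T_{\YY\ZZ}$, I then expect $\pcondone = 1 - \tr(\Omega \rhozeropsi) = 1 - \frac1d \tr T = 0$ and $\pcondtwo = \tr(\Omega \rhoonepsi) = \frac{1}{d^2}\tr T = \frac1d$, using $\tr T = d$.

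Optimality then follows cheaply from what has already been established. Theorem~\ref{thm:one_random_measurement_symmetric} gives $p^H_{succ} = 1-\frac{1}{2d}$, so the minimum-error probability of the symmetric scheme is $1-p^H_{succ} = \frac{1}{2d}$. The inequality of Appendix~\ref{app:inequality_between_errors} states that $1-p^H_{succ} \le \frac12(\pcondone + \pcondtwo)$ holds for every admissible $\psi$ and $\Omega$; imposing the no-false-positive constraint $\pcondone = 0$ forces $\pcondtwo \ge \frac1d$, which matches the constructed value and yields $\ptwo = \frac1d$.

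The routine computations (the two dephasing/depolarizing images and the trace $\tr T = d$) are straightforward, so I do not expect a genuine technical hurdle in the lower bound, which is handed to us by the symmetric theorem. The only real design decision is the choice of the pair $(\psi,\Omega)$: the main obstacle is to guess a state for which $\rhozeropsi$ is supported exactly inside the range of $\Omega$ (so that $\pcondone$ vanishes identically) while $\rhoonepsi$ is spread out as uniformly as the symmetric bound allows, so that $\tr(\Omega \rhoonepsi)$ is pushed down to $\frac1d$. The maximally entangled choice on $\YY\ZZ$ together with $\Omega = T$ achieves precisely this, and everything else is pinned down by the already-proved symmetric value.
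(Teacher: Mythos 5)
Your proposal is correct and follows essentially the same route as the paper: probe the unknown box with half of a maximally entangled pair, accept on the support of the known measurement's Choi state (your $\Omega=\proj{0}\otimes T$ is the paper's $\sum_i\proj{i}\otimes\proj{u_i}^\top$ after the WLOG reduction $U=\1$), and certify optimality via Theorem~\ref{thm:one_random_measurement_symmetric} combined with the inequality of Appendix~\ref{app:inequality_between_errors}. The only differences are cosmetic bookkeeping of the inert $\XX$ register.
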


\begin{proof}
This proof goes similar as the proof of
Theorem~\ref{thm:both_random_asymmetric}. We will choose a fixed
input state  on only two registers. We will also fix the final measurement and
calculate the probabilities of making the false positive and false negative
errors. Later, from inequality between errors in symmetric and asymmetric
schemes in Appendix~\ref{app:inequality_between_errors} we will see that the
calculated $\ptwo$ is the optimal one.

As the input state we take $\psi \coloneqq \frac{1}{d} \projV{\1}$.
We calculate the output states
\begin{equation}
\begin{split}
\rhozeropsi &\coloneqq 
\left(\PP_U \otimes \1\right) (\psi)
= \frac{1}{d} \left(\PP_U \otimes \1\right) (\projV{\1})\\
&\phantom{:}= \frac{1}{d} \sum_{i} \proj{i} \otimes \proj{u_i}^\top  
\end{split}
\end{equation}
and
\begin{equation}
\begin{split}
\rhoonepsi &\coloneqq 
\int \left(\PP_V \otimes \1\right) (\psi) dV\\
&= \frac{1}{d} \int \left( \PP_V \otimes \1\right) (\projV{\1}) dV \\
&= \frac{1}{d} \int \sum_{i} \proj{i} \otimes \proj{v_i}^\top dV \\
&=  \frac{1}{d}  \sum_{i} \proj{i} \otimes \int \proj{v_i}^\top dV\\
&=  \frac{1}{d^2}   \1 \otimes \1.
\end{split}
\end{equation}

Recall that the  measurement effect $\Omega$ correspond to $H_0$ hypothesis and
$\1 - \Omega$  correspond to $H_1$ hypothesis. Hence we have probabilities of
false positive and false negative errors (for given input state) equal
\begin{equation}
\begin{split}
\pcondone &= 1- \tr \left(\Omega \rhozeropsi \right),\\
\pcondtwo &=  \tr \left(\Omega \rhoonepsi \right).
\end{split}
\end{equation}

Without loss of generality we can consider $\Omega$ in the block-diagonal form, 
ie. 
\begin{equation}
\Omega \coloneqq \sum_{i} \proj{i} \otimes \Omega_{i}^\top.
\end{equation}
As the unitary matrix $U$ is known, we can use it to construct the final 
measurement. Let 
\begin{equation}
\Omega_{i} \coloneqq \proj{u_i}
\end{equation}
for every $i = 1, \ldots , d$.

Then
\begin{widetext}
\begin{equation}
\begin{split}
\tr \left(\Omega \rhozeropsi  \right)
&= \tr \left(\left(\sum_{i} \proj{i} \otimes 
\proj{u_i}^\top \right)
\left(\frac{1}{d} \sum_{j} \proj{j} \otimes \proj{u_j }^\top \right)\right) \\
&= \frac{1}{d} \sum_{i} \tr \left(\ket{u_i} \braket{u_i}{u_i} \bra{u_i}\right) 
= \frac{1}{d} \sum_{i} |\braket{u_i}{u_i}|^2 = 1  
\end{split}
\end{equation}
\end{widetext}
and hence
\begin{equation}
\pcondone = 1- \tr \left(\Omega \rhozeropsi \right) = 0.
\end{equation}

Eventually
\begin{equation}
\begin{split}
\pcondtwo &=  \tr \left(\Omega \rhoonepsi \right)\\
&= \tr \left(\left(\sum_{i} \proj{i} \otimes \proj{u_i}^\top \right)
\left(\frac{1}{d^2} \1 \otimes \1 \right) \right) \\
&= \frac{1}{d^2} \sum_i \tr\left( \proj{u_i}\right) 
= \frac{1}{d}.
\end{split}
\end{equation}

It remains to explain why $\pcondtwo = \ptwo$. Note that the arithmetic mean of 
probabilities of both types of errors equals $\frac{1}{2d}$ which is equal to 
the probability of erroneous discrimination in the symmetric scheme (see 
Theorem~\ref{thm:one_random_measurement_symmetric}).  From the inequality 
between errors in the symmetric and asymmetric schemes in 
Appendix~\ref{app:inequality_between_errors} we conclude that $\ptwo = 
\frac{1}{d}$.

\end{proof}

\section{Conclusion}\label{sec:conclusion}
We were studying the problem whether the given von Neumann measurement is the
same as the reference one. We were considering the situation when the reference
measurement is given without classical description and when its classical
description is known. Both situations were studied in the symmetric and
asymmetric scenarios. We proved that in both cases one can achieve the
probability of false positive error equal zero, and we calculated optimal
probabilities of false negative errors. We also calculated the probabilities of
successful discrimination in the symmetric discrimination scheme.

\section*{Acknowledgements}
ZP and AK were supported by the project ,,Near-term quantum computers
Challenges, optimal implementations and applications” under Grant Number
POIR.04.04.00-00-17C1/18-00, which is carried out within the Team-Net programme
of the Foundation for Polish Science co-financed by the European Union under the
European Regional Development Fund. LP acknowledges support from the National
Science Center (NCN), Poland, under Project Sonata Bis 10, No.
2020/38/E/ST3/00269.

\bibliographystyle{quantum}
\bibliography{bibliography}

\begin{thebibliography}{10}

\bibitem{eisert2020quantum}
Jens Eisert, Dominik Hangleiter, Nathan Walk, Ingo Roth, Damian Markham, Rhea
  Parekh, Ulysse Chabaud, and Elham Kashefi.
\newblock ``Quantum certification and benchmarking''.
\newblock \href{https://dx.doi.org/10.1038/s42254-020-0186-4}{Nature Reviews
  PhysicsPages 1--9}~(2020).

\bibitem{paris2004quantum}
Matteo Paris and Jaroslav Rehacek.
\newblock ``Quantum state estimation''.
\newblock \href{https://dx.doi.org/10.1007/b98673}{Volume 649}.
\newblock Springer Science \& Business Media. ~(2004).

\bibitem{bergou2007quantum}
J{\'a}nos~A Bergou.
\newblock ``Quantum state discrimination and selected applications''.
\newblock \href{https://dx.doi.org/10.1364/CQO.2007.CMF4}{Journal of Physics:
  Conference Series {\bf 84}, 012001}~(2007).

\bibitem{barnett2009quantum}
Stephen~M Barnett and Sarah Croke.
\newblock ``Quantum state discrimination''.
\newblock \href{https://dx.doi.org/10.1364/AOP.1.000238}{Advances in Optics and
  Photonics {\bf 1}, 238--278}~(2009).

\bibitem{bae2015quantum}
Joonwoo Bae and Leong-Chuan Kwek.
\newblock ``Quantum state discrimination and its applications''.
\newblock \href{https://dx.doi.org/10.1088/1751-8113/48/8/083001}{Journal of
  Physics A: Mathematical and Theoretical {\bf 48}, 083001}~(2015).

\bibitem{acin2001statistical}
Antonio Acin.
\newblock ``Statistical distinguishability between unitary operations''.
\newblock \href{https://dx.doi.org/10.1103/PhysRevLett.87.177901}{Physical
  Review Letters {\bf 87}, 177901}~(2001).

\bibitem{bae2015discrimination}
Joonwoo Bae.
\newblock ``Discrimination of two-qubit unitaries via local operations and
  classical communication''.
\newblock \href{https://dx.doi.org/10.1038/srep18270}{Scientific Reports {\bf
  5}, 1--8}~(2015).

\bibitem{kawachi2019quantum}
Akinori Kawachi, Kenichi Kawano, Fran{\c{c}}ois Le~Gall, and Suguru Tamaki.
\newblock ``Quantum query complexity of unitary operator discrimination''.
\newblock \href{https://dx.doi.org/10.1587/transinf.2018FCP0012}{IEICE
  TRANSACTIONS on Information and Systems {\bf 102}, 483--491}~(2019).

\bibitem{sacchi2005optimal}
Massimiliano~F Sacchi.
\newblock ``Optimal discrimination of quantum operations''.
\newblock \href{https://dx.doi.org/10.1103/PhysRevA.71.062340}{Physical Review
  A {\bf 71}, 062340}~(2005).

\bibitem{sacchi2005entanglement}
Massimiliano~F Sacchi.
\newblock ``Entanglement can enhance the distinguishability of
  entanglement-breaking channels''.
\newblock \href{https://dx.doi.org/10.1103/PhysRevA.72.014305}{Physical Review
  A {\bf 72}, 014305}~(2005).

\bibitem{piani2009all}
Marco Piani and John Watrous.
\newblock ``All entangled states are useful for channel discrimination''.
\newblock \href{https://dx.doi.org/10.1103/PhysRevLett.102.250501}{Physical
  Review Letters {\bf 102}, 250501}~(2009).

\bibitem{duan2009perfect}
Runyao Duan, Yuan Feng, and Mingsheng Ying.
\newblock ``Perfect distinguishability of quantum operations''.
\newblock \href{https://dx.doi.org/10.1103/PhysRevLett.103.210501}{Physical
  Review Letters {\bf 103}, 210501}~(2009).

\bibitem{wang2006unambiguous}
Guoming Wang and Mingsheng Ying.
\newblock ``Unambiguous discrimination among quantum operations''.
\newblock \href{https://dx.doi.org/10.1103/PhysRevA.73.042301}{Physical Review
  A {\bf 73}, 042301}~(2006).

\bibitem{krawiec2021excluding}
Aleksandra Krawiec, {\L}ukasz Pawela, and Zbigniew Pucha{\l}a.
\newblock ``Excluding false negative error in certification of quantum
  channels''.
\newblock \href{https://dx.doi.org/10.1038/s41598-021-00444-x}{Scientific
  Reports {\bf 11}, 1--11}~(2021).

\bibitem{ziman2008process}
M{\'a}rio Ziman.
\newblock ``Process positive-operator-valued measure: A mathematical framework
  for the description of process tomography experiments''.
\newblock \href{https://dx.doi.org/10.1103/PhysRevA.77.062112}{Physical Review
  A {\bf 77}, 062112}~(2008).

\bibitem{sedlak2009unambiguous}
Michal Sedl{\'a}k and M{\'a}rio Ziman.
\newblock ``Unambiguous comparison of unitary channels''.
\newblock \href{https://dx.doi.org/10.1103/PhysRevA.79.012303}{Physical Review
  A {\bf 79}, 012303}~(2009).

\bibitem{ziman2010single}
M{\'a}rio Ziman and Michal Sedl{\'a}k.
\newblock ``Single-shot discrimination of quantum unitary processes''.
\newblock \href{https://dx.doi.org/10.1080/09500340903349963}{Journal of Modern
  Optics {\bf 57}, 253--259}~(2010).

\bibitem{choi2022single}
Yujun Choi, Tanmay Singal, Young-Wook Cho, Sang-Wook Han, Kyunghwan Oh, Sung
  Moon, Yong-Su Kim, and Joonwoo Bae.
\newblock ``Single-copy certification of two-qubit gates without
  entanglement''.
\newblock \href{https://dx.doi.org/10.1103/PhysRevApplied.18.044046}{Physical
  Review Applied {\bf 18}, 044046}~(2022).

\bibitem{hillery2010decision}
Mark Hillery, Erika Andersson, Stephen~M Barnett, and Daniel Oi.
\newblock ``Decision problems with quantum black boxes''.
\newblock \href{https://dx.doi.org/10.1080/09500340903203129}{Journal of Modern
  Optics {\bf 57}, 244--252}~(2010).

\bibitem{soeda2021optimal}
Akihito Soeda, Atsushi Shimbo, and Mio Murao.
\newblock ``Optimal quantum discrimination of single-qubit unitary gates
  between two candidates''.
\newblock \href{https://dx.doi.org/10.1103/PhysRevA.104.022422}{Physical
  {R}eview {A} {\bf 104}, 022422}~(2021).

\bibitem{hashimoto2022comparison}
Yutaka Hashimoto, Akihito Soeda, and Mio Murao.
\newblock ``Comparison of unknown unitary channels with multiple uses''~(2022).
\newblock  \href{http://arxiv.org/abs/2208.12519}{arXiv:2208.12519}.

\bibitem{watrous2018theory}
John Watrous.
\newblock ``The theory of quantum information''.
\newblock \href{https://dx.doi.org/10.1017/9781316848142}{Cambridge University
  Press}. ~(2018).

\bibitem{puchala2018strategies}
Zbigniew Pucha{\l}a, {\L}ukasz Pawela, Aleksandra Krawiec, and Ryszard
  Kukulski.
\newblock ``Strategies for optimal single-shot discrimination of quantum
  measurements''.
\newblock \href{https://dx.doi.org/10.1103/PhysRevA.98.042103}{Physical Review
  A {\bf 98}, 042103}~(2018).

\bibitem{puchala2021multiple}
Zbigniew Pucha{\l}a, {\L}ukasz Pawela, Aleksandra Krawiec, Ryszard Kukulski,
  and Micha{\l} Oszmaniec.
\newblock ``Multiple-shot and unambiguous discrimination of von {N}eumann
  measurements''.
\newblock \href{https://dx.doi.org/10.22331/q-2021-04-06-425}{Quantum {\bf 5},
  425}~(2021).

\bibitem{lewandowska2021optimal}
Paulina Lewandowska, Aleksandra Krawiec, Ryszard Kukulski, {\L}ukasz Pawela,
  and Zbigniew Pucha{\l}a.
\newblock ``On the optimal certification of von {N}eumann measurements''.
\newblock \href{https://dx.doi.org/10.1038/s41598-022-10219-7}{Scientific
  Reports {\bf 11}, 1--16}~(2021).

\bibitem{mikova2014optimal}
M~Mikov{\'a}, M~Sedl{\'a}k, I~Straka, M~Mi{\v{c}}uda, M~Ziman, M~Je{\v{z}}ek,
  M~Du{\v{s}}ek, and J~Fiur{\'a}{\v{s}}ek.
\newblock ``Optimal entanglement-assisted discrimination of quantum
  measurements''.
\newblock \href{https://dx.doi.org/10.1103/PhysRevA.90.022317}{Physical Review
  A {\bf 90}, 022317}~(2014).

\bibitem{ziman2009unambiguous}
Mario Ziman, Teiko Heinosaari, and Michal Sedl{\'a}k.
\newblock ``Unambiguous comparison of quantum measurements''.
\newblock \href{https://dx.doi.org/10.1103/PhysRevA.80.052102}{Physical Review
  A {\bf 80}, 052102}~(2009).

\bibitem{sedlak2014optimal}
Michal Sedl{\'a}k and M{\'a}rio Ziman.
\newblock ``Optimal single-shot strategies for discrimination of quantum
  measurements''.
\newblock \href{https://dx.doi.org/10.1103/PhysRevA.90.052312}{Physical Review
  A {\bf 90}, 052312}~(2014).

\bibitem{lewandowska2023strategies}
Paulina Lewandowska, {\L}ukasz Pawela, and Zbigniew Pucha{\l}a.
\newblock ``Strategies for single-shot discrimination of process matrices''.
\newblock \href{https://dx.doi.org/10.1038/s41598-023-30191-0}{Scientific
  Reports {\bf 13}, 3046}~(2023).

\bibitem{flatt2022contextual}
Kieran Flatt, Hanwool Lee, Carles Roch~I Carceller, Jonatan~Bohr Brask, and
  Joonwoo Bae.
\newblock ``Contextual advantages and certification for maximum-confidence
  discrimination''.
\newblock \href{https://dx.doi.org/10.1103/PRXQuantum.3.030337}{PRX Quantum
  {\bf 3}, 030337}~(2022).

\bibitem{nechita2018almost}
Ion Nechita, Zbigniew Pucha{\l}a, {\L}ukasz Pawela, and Karol {\.Z}yczkowski.
\newblock ``Almost all quantum channels are equidistant''.
\newblock \href{https://dx.doi.org/10.1063/1.5019322}{Journal of Mathematical
  Physics {\bf 59}, 052201}~(2018).

\bibitem{helstrom1969quantum}
Carl~W Helstrom.
\newblock ``Quantum detection and estimation theory''.
\newblock \href{https://dx.doi.org/10.1007/BF01007479}{Journal of Statistical
  Physics {\bf 1}, 231--252}~(1969).

\bibitem{salek2022usefulness}
Farzin Salek, Masahito Hayashi, and Andreas Winter.
\newblock ``Usefulness of adaptive strategies in asymptotic quantum channel
  discrimination''.
\newblock \href{https://dx.doi.org/10.1103/PhysRevA.105.022419}{Physical Review
  A {\bf 105}, 022419}~(2022).

\bibitem{wilde2020amortized}
Mark~M Wilde, Mario Berta, Christoph Hirche, and Eneet Kaur.
\newblock ``Amortized channel divergence for asymptotic quantum channel
  discrimination''.
\newblock \href{https://dx.doi.org/10.1007/s11005-020-01297-7}{Letters in
  Mathematical Physics {\bf 110}, 2277--2336}~(2020).

\bibitem{zhou2021asymptotic}
Sisi Zhou and Liang Jiang.
\newblock ``Asymptotic theory of quantum channel estimation''.
\newblock \href{https://dx.doi.org/10.1103/PRXQuantum.2.010343}{PRX Quantum
  {\bf 2}, 010343}~(2021).

\bibitem{cooney2016strong}
Tom Cooney, Mil{\'a}n Mosonyi, and Mark~M Wilde.
\newblock ``Strong converse exponents for a quantum channel discrimination
  problem and quantum-feedback-assisted communication''.
\newblock \href{https://dx.doi.org/10.1007/s00220-016-2645-4}{Communications in
  Mathematical Physics {\bf 344}, 797--829}~(2016).

\bibitem{puchala2017symbolic}
Z~Pucha{\l}a and JA~Miszczak.
\newblock ``Symbolic integration with respect to the {H}aar measure on the
  unitary groups''.
\newblock \href{https://dx.doi.org/10.1515/bpasts-2017-0003}{Bulletin of the
  Polish Academy of Sciences. Technical {Sciences }{\bf 65}}~(2017).

\bibitem{collins2006integration}
Beno{\^\i}t Collins and Piotr {\'S}niady.
\newblock ``Integration with respect to the {H}aar measure on unitary,
  orthogonal and symplectic group''.
\newblock \href{https://dx.doi.org/10.1007/s00220-006-1554-3}{Communications in
  Mathematical Physics {\bf 264}, 773--795}~(2006).

\end{thebibliography}

\appendix

\section{Lemma~\ref{lm:j2=wj2_for_measurements}}\label{app:lemma}

\begin{lemma}\label{lm:j2=wj2_for_measurements}
Let $J$ be as defined in 
Eq.~\eqref{eq:def_J_for_both_unknown_measurements}, $T \coloneqq \Delta (S)$  
and $W \coloneqq (2T - \1) \otimes S$, where $S$ is the swap matrix of 
dimension $d^2$. Then $J^2 = (WJ)^2$.
\end{lemma}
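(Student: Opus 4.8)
The plan is to exploit that every operator in the statement lives inside a small commutative algebra on each $d^2$-dimensional tensor factor, so that $W$ and $J$ in fact commute and $W$ squares to the identity; the identity $J^2 = (WJ)^2$ is then immediate.

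First I would record the algebraic relations satisfied by $S$ and $T \coloneqq \Delta(S)$ on a single $d^2$-dimensional factor. Since $S$ is the swap, $S^2 = \1$, and since $S = \sum_{ij}\ketbra{ij}{ji}$ one computes $T = \Delta(S) = \sum_i \proj{ii}$, the orthogonal projector onto the diagonal subspace $\mathrm{span}\{\ket{ii}\}$. Hence $T^2 = T$ and $T = T^\dagger$, and because the swap fixes each $\ket{ii}$ we obtain the intertwining relations $ST = TS = T$. These relations are the only nontrivial input; everything else is formal.

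Two consequences then do all the work. On the one hand, $W = (2T-\1)\otimes S$ is a Hermitian involution: $(2T-\1)^2 = 4T^2 - 4T + \1 = \1$ by $T^2 = T$, and $S^2 = \1$, so $W^2 = \1$. On the other hand, $W$ commutes with $J$. Writing $J = \frac{1}{d^2-1}(\1\otimes P + T\otimes Q)$ with $P \coloneqq \frac{1}{d^2}\1 - \frac{1}{d}S$ and $Q \coloneqq S - \frac{1}{d}\1$, I would check the two tensor legs separately: on the first leg $2T-\1$ commutes with both $\1$ and $T$ (indeed $(2T-\1)T = T = T(2T-\1)$), and on the second leg $S$ commutes with $P$ and $Q$ because both are polynomials in $S$. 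Term by term this yields $WJ = JW$.

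Combining the two facts finishes the argument: $(WJ)^2 = W(JW)J = W(WJ)J = W^2 J^2 = J^2$. The only place demanding care is the bookkeeping of the relations for $T = \Delta(S)$---in particular $T^2 = T$ and $ST = TS = T$---since the slick commutation argument collapses if these are misstated; once they are in hand there is no genuine obstacle. As a cross-check I would, if needed, simultaneously block-diagonalize $\mathrm{span}\{\1,S,T\}$ on each factor into the diagonal, off-diagonal-symmetric, and antisymmetric subspaces (eigen-data $\1\mapsto(1,1,1)$, $S\mapsto(1,1,-1)$, $T\mapsto(1,0,0)$), reducing $J^2 = (WJ)^2$ to nine scalar identities of the form $w^2 j^2 = j^2$ with $w = \pm1$.
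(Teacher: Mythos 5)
Your proof is correct, but it takes a genuinely different route from the paper's. The paper proves the identity by brute force: it expands both $J^2$ and $(WJ)^2$ in the basis $\{\1\otimes\1,\ \1\otimes S,\ T\otimes\1,\ T\otimes S\}$ (sixteen products each time, using $S^2=\1$, $T^2=T$, $ST=TS=T$ implicitly) and checks that the two coefficient lists coincide. You instead isolate the structural reason the identity holds: $T=\sum_i\proj{ii}$ is the projector onto the subspace fixed pointwise by $S$, so $\mathrm{span}\{\1,S,T\}$ is a commutative $*$-algebra on each $d^2$-dimensional leg; hence $W=(2T-\1)\otimes S$ is a Hermitian involution living in the same commutative algebra as $J$, and $(WJ)^2=W^2J^2=J^2$ is then formal. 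Your version is shorter and less error-prone, and it buys something extra: the main text actually invokes the lemma to assert that $WJ$ realizes $|J|$ (a polar decomposition), which needs more than $(WJ)^2=J^2$ --- it needs $WJ=JW$ Hermitian and $WJ\succeq 0$ --- and your simultaneous block-diagonalization into the diagonal, symmetric-off-diagonal and antisymmetric subspaces reduces that positivity check to inspecting the signs of nine scalars, something the paper's expansion gives no handle on. What the paper's computation buys in exchange is the explicit closed form of $WJ$, which is reused immediately afterwards to evaluate $\Tr_{\XX'\otimes\YY'}(WJ)$; if you adopt your argument you would still want to record that one intermediate product.
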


\begin{proof}
As
\begin{widetext}
\begin{equation}
J^2 = \left(\frac{1}{d^2-1}\right)^2
\left(\frac{1}{d^2} \1 \otimes \1 - \frac{1}{d} \1 \otimes S + T \otimes S - 
\frac{1}{d} T \otimes \1 \right)^2,
\end{equation}
\end{widetext}
we calculate
\begin{widetext}
\begin{equation}
\begin{split}
&\left(\frac{1}{d^2} \1 \otimes \1 - \frac{1}{d} \1 \otimes S + T \otimes S - 
\frac{1}{d} T \otimes \1 \right)^2 \\
&= \frac{1}{d^4} \1 \otimes \1 - \frac{1}{d^3} \1 \otimes S + \frac{1}{d^2} T 
\otimes S - \frac{1}{d^3} T \otimes \1  \\
&-\frac{1}{d^3} \1 \otimes S + \frac{1}{d^2} \1 \otimes \1 - \frac{1}{d} T 
\otimes \1 + \frac{1}{d^2} T \otimes S  \\
&+ \frac{1}{d^2} T \otimes S - \frac{1}{d} T \otimes \1 +  T 
\otimes \1 - \frac{1}{d} T \otimes S \\
&-\frac{1}{d^3} T \otimes \1 + \frac{1}{d^2} T \otimes S - \frac{1}{d} T 
\otimes S + \frac{1}{d^2} T \otimes \1  \\
&= \frac{d^2+1}{d^4} \1 \otimes \1 - \frac{2}{d^3} \1 \otimes S 
+ \left( 1 + \frac{1}{d^2} - \frac{2}{d}-\frac{2}{d^2}  \right) T \otimes \1 
+ \left( \frac{4}{d^2} - \frac{2}{d} \right)  T \otimes S \\
&= \frac{d^2+1}{d^4} \1 \otimes \1 - \frac{2}{d^3} \1 \otimes S 
+ \frac{(d^2+1) (d-2)}{d^3}  T \otimes \1 
+ \frac{4-2d}{d^2} T \otimes S,
\end{split}
\end{equation}
\end{widetext}
and eventually
\begin{widetext}
\begin{equation}
J^2 = \left(\frac{1}{d^2-1}\right)^2
\left(\frac{d^2+1}{d^4} \1 \otimes \1 - \frac{2}{d^3} \1 \otimes S 
+ \frac{(d^2+1) (d-2)}{d^3}  T \otimes \1 
+ \frac{4-2d}{d^2} T \otimes S
\right).
\end{equation}
\end{widetext}

On the other hand
\begin{widetext}
\begin{equation}
WJ = \left( 2 T \otimes S - \1 \otimes S  \right)
\frac{1}{d^2-1}
\left(\frac{1}{d^2} \1 \otimes \1 - \frac{1}{d} \1 \otimes S + T \otimes S - 
\frac{1}{d} T \otimes \1 \right).
\end{equation}
\end{widetext}
Hence we calculate
\begin{widetext}
\begin{equation}
\begin{split}
& \left( 2 T \otimes S - \1 \otimes S  \right)
\left(\frac{1}{d^2} \1 \otimes \1 - \frac{1}{d} \1 \otimes S + T \otimes S - 
\frac{1}{d} T \otimes \1 \right) \\
&= \frac{2}{d^2} T \otimes S  - \frac{2}{d} T \otimes \1 + 2 T \otimes \1
- \frac{2}{d} T \otimes S \\
&- \frac{1}{d^2} \1 \otimes S + \frac{1}{d} \1 \otimes \1 -  T 
\otimes \1 + \frac{1}{d} T \otimes S  \\
&= \frac{1}{d} \1 \otimes \1  - \frac{1}{d^2} \1 \otimes S + \frac{d-2}{d} T 
\otimes \1 - \frac{d-2}{d^2} T \otimes S,
\end{split}
\end{equation}
\end{widetext}
and thus
\begin{widetext}
\begin{equation}
\begin{split}
&\left(  \frac{1}{d} \1 \otimes \1  - \frac{1}{d^2} \1 \otimes S + 
\frac{d-2}{d} T \otimes \1 - \frac{d-2}{d^2} T \otimes S \right)^2 \\
&= \frac{1}{d^2} \1 \otimes \1  - \frac{1}{d^3} \1 \otimes S + \frac{d-2}{d^2} 
T \otimes \1 - \frac{d-2}{d^3} T \otimes S \\
&- \frac{1}{d^3} \1 \otimes S + \frac{1}{d^4} \1 \otimes \1 - \frac{d-2}{d^3} T 
\otimes S +\frac{d-2}{d^4} T \otimes \1 \\
&+ \frac{d-2}{d^2} T \otimes \1 - \frac{d-2}{d^3} T \otimes S
+ \frac{(d-2)^2}{d^2} T \otimes \1 - \frac{(d-2)^2}{d^3} T \otimes S \\
&- \frac{d-2}{d^3} T \otimes S + \frac{d-2}{d^4} T \otimes \1
- \frac{(d-2)^2}{d^3} T \otimes S + \frac{(d-2)^2}{d^4} T \otimes \1 \\
&= \frac{d^2+1}{d^4} \1 \otimes \1 - \frac{2}{d^3} \1 \otimes S + 
\frac{(d^2+1)(d-2)}{d^3} T \otimes \1 + \frac{4 - 2d}{d^2} T \otimes S.
\end{split}
\end{equation}
\end{widetext}
Eventually
\begin{widetext}
\begin{equation}
(WJ)^2 = \left(\frac{1}{d^2-1}\right)^2
\left(\frac{d^2+1}{d^4} \1 \otimes \1 - \frac{2}{d^3} \1 \otimes S 
+ \frac{(d^2+1) (d-2)}{d^3}  T \otimes \1 
+ \frac{4-2d}{d^2} T \otimes S
\right)
\end{equation}
\end{widetext}
and hence $(WJ)^2 = J^2$.
\end{proof}

\section{Inequality between errors}\label{app:inequality_between_errors}

We will show that 
\begin{equation}
p_e^H \leq  \frac{1}{2} (p_1 + p_2),
\end{equation}
where $p_e^H = 1-p_{succ}^H$ is the probability of error from the 
Holevo-Helstrom Theorem.

Let us recall that from Holevo-Helstrom Theorem we have
\begin{equation}
\frac{1}{2} \Tr(\Omega_0\rho_0) + \frac{1}{2} \Tr (\Omega_1\rho_1) 
\leq 1- p_e^H,
\end{equation}
hence
\begin{equation}
p_e^H \leq 1-\frac{1}{2} \left(\Tr(\Omega_0\rho_0) +\Tr(\Omega_1\rho_1)\right).
\end{equation}

On the other hand we know that
\begin{equation}
\begin{split}
\Tr(\Omega_0\rho_0) + \Tr(\Omega_1\rho_0)
&= 1, \\
\Tr(\Omega_0\rho_1) + \Tr(\Omega_1\rho_1)
&= 1.
\end{split}
\end{equation}
and hence
\begin{equation}
\Tr(\Omega_0\rho_0) +\Tr(\Omega_1\rho_1) 
= 2 - (p_1 + p_2).
\end{equation}

Therefore
\begin{equation}
\begin{split}
p_e^H &\leq 
1-\frac{1}{2} \left(\Tr(\Omega_0\rho_0) +\Tr(\Omega_1\rho_1)\right)\\
&= 1-\frac{1}{2} \left(  2-(p_1 + p_2)   \right) \\
&= \frac{1}{2} (p_1 + p_2).
\end{split}
\end{equation}

\end{document}